\newtheorem{theorem}{Theorem}
\newtheorem{lemma}[theorem]{Lemma}
\newcommand{\keywords}[1]{\par\addvspace\baselineskip
\noindent Keywords:\enspace\ignorespaces#1}
\newcommand{\Ts}{T^{\ast}}
\newcommand{\eps}{\varepsilon}
\title{Single-Source Dilation-Bounded\\ Minimum Spanning Trees%
  \thanks{This research was supported in part by NRF
    grant~2011-0016434 and in part by NRF grant 2011-0030044
    (SRC-GAIA), both funded by the government of Korea.}}
\let\geq\geqslant
\let\leq\leqslant
\def\section{\@startsection {section}{1}{\z@}{-3.5ex plus -1ex minus
-.2ex}{2.3ex plus .2ex}{\large\bf}}
\def\subsection{\@startsection{subsection}{2}{\z@}{-3.25ex plus -1ex
minus -.2ex}{1.5ex plus .2ex}{\normalsize\bf}}
\def\@fnsymbol#1{\ensuremath{\ifcase#1\or *\or **\or 1\or 2\or
    3\or 4\or 5\or 6\or 7 \or 8\ or 9 \or 10\or 11 \else\@ctrerr\fi}}
\author{Otfried Cheong%
  \thanks{Department of Computer Science, KAIST, Daejeon, South Korea.
    \mailsa \mailsb}
  \and Changryeol Lee\footnotemark[2]}
\begin{document}

\maketitle

\begin{abstract}
  Given a set $S$ of points in the plane, a geometric network for $S$
  is a graph $G$ with vertex set $S$ and straight edges.  We consider
  a broadcasting situation, where one point $r \in S$ is a designated
  source.  Given a dilation factor~$\delta$, we ask for a geometric
  network~$G$ such that for every point $v \in S$ there is a path from
  $r$ to $v$ in~$G$ of length at most $\delta|rv|$, and such that the
  total edge length is minimized.  We show that finding such a network
  of minimum total edge length is NP-hard, and give an approximation
  algorithm.
  \keywords{geometric network,
    spanner, minimum spanning tree, dilation, single source dilation}
\end{abstract}

\section{Introduction}

Given a set~$S$ of points in the plane, a geometric network for~$S$ is
an edge-weighted graph~$G$ with vertex set~$S$ and straight edges.
The weight of an edge~$(u,v)$ is the length of the segment $uv$, that
is, the Euclidean distance $|uv|$ of the two points.

Various types of networks, such as communication networks, road
networks, or telephone networks, have been modeled as geometric
networks.  One important parameter of a geometric network is its total
\emph{cost} $\ell(G)$: the sum of all edge lengths.   The network that minimizes
the cost while connecting all points in~$S$ is the Euclidean minimum
spanning tree of~$S$.  Another well-studied parameter is the
\emph{dilation} of a network.  For two points $u,v \in S$, the
dilation of the pair~$(u,v)$ is defined to be the ratio
\begin{displaymath}
    \Delta_G(u,v):= {d_G(u,v) \over |uv|}.
\end{displaymath}
of the length of the shortest path between $u$ and~$v$ in~$G$ and the
distance~$|uv|$.  The dilation of a network is commonly defined as the
maximum of $\Delta_G(u, v)$ over all pairs $u, v \in S$.  The network
minimizing the dilation is the complete graph, which has dilation~1.

The complete graph has prohibitively large cost, while the minimum
spanning tree may have large dilation.  Balancing these two parameters
has been the subject of much research in the literature, we refer to
the book by Narasimhan and Smid~\cite{narasimhan-smid} for an
overview.

In this paper, we consider a \emph{broadcasting} situation, where one
point $r \in S$ is a designated source, and the purpose of the network
is to broadcast information from~$r$ to all the other nodes.  A node
$v$ receives the information with delay~$d_G(r,v)$, and we are
interested in the \emph{relative delay}, which is the dilation
$\Delta_G(r,v) = d_G(r,v)/|rv|$.  We define the (relative)
\emph{delay} of $G$ to be the maximum
\[
\Delta(G) = \max_{v \in S\setminus \{r\}} \Delta_G(r, v)
= \max_{v \in S\setminus \{r\}} \frac{d_G(r, v)}{|rv|}.
\]
Note that we can assume our network~$G$ to be a tree, as the
shortest-path tree with source at~$r$ will have the same delay as~$G$
itself.

The network minimizing $\Delta(G)$ is the star consisting of all edges
$(r, v)$, for $v\in S\setminus\{r\}$.  It has $\Delta(G) = 1$, but its
cost $\ell(G)$ may be prohibitively large.  The minimum spanning tree,
which minimizes~$\ell(G)$, may have large delay.  In practice, a
trade-off between both factors needs to be achieved. Related notions
have been studied in the literature. For instance, computing the
tree~$T$ that minimizes $\ell(T)$ while bounding the distance of all
nodes from the source in the graph is
NP-complete~\cite{pyo1996constructing}.  Similarly,
minimizing~$\ell(T)$ while keeping all nodes at most~$k$ hops from
the source is NP-complete~\cite{althaus2005approximating}.

Our problem, which we term the \emph{single-source dilation-bounded
  minimum spanning tree} problem, takes as input the point set~$S$,
the designated source point~$r \in S$, and a delay~bound~$\delta \geq
1$.  We ask for a spanning tree~$T$ of smallest possible
cost~$\ell(T)$ such that $\Delta(T) \leq \delta$, or, in other words,
such that for every $v \in S \setminus \{r\}$, we have $d_T(r,v) \leq
\delta|rv|$.

When we set $\delta = 1$, then the answer is simply the star
connecting $r$ with all other points.  When we set $\delta \geq n-1$,
then the answer is the minimum spanning tree of~$S$, and so our
problem interpolates between these two networks.

We show that solving the problem exactly is NP-hard by reduction from
Knapsack in Section~\ref{sec:np-hard}.  This leads us to consider an
approximation algorithm.  Since this algorithm is rather simple, we
present it first in Section~\ref{sec:approx}.

\section{Approximation Algorithm}
\label{sec:approx}

An approximate solution to our problem can be computed using known
algorithms for $\delta$-spanners.  We introduce these results first.

Given a point set $S$ in the plane, a $\delta$-spanner for $S$ is a
geometric network on~$S$ such that $d_{G}(u,v) \leq \delta|uv|$ for
each pair of points $u,v \in S$, where $\delta \geq 1$.  Many
algorithms to compute a $\delta$-spanner for a given point set have
been given in the literature, see again the book by Narasimhan and
Smid~\cite{narasimhan-smid} for an overview.  We will make use of a
result by Gudmundsson, Levcopoulos, and
Narasimhan~\cite{gudmundsson2000improved}, who showed the following:

For any real number $\delta>1$, a $\delta$-spanner of a point set~$S$
in $\mathbb{R}^d$ can be constructed in $O(n\log n)$ time such that
the spanner has $c_1 n$ edges, maximum degree~$c_2$, and cost $c_3
\ell(M)$, where $M$ is the minimum spanning tree of~$S$, and $c_1,
c_2, c_3$ are constants depending on~$\delta$ and the dimension~$d$.

We can now explain our simple approximation algorithm:
\begin{theorem}
  Given a set of $n$ points $S$ in the plane, a designated source~$r
  \in S$, and a real constant $\delta>1$, we can construct in time
  $O(n \log n)$ a tree~$T$ with vertex set~$S$ such that the delay
  $\Delta(T) \leq \delta$, and $\ell(T) \leq c_\delta\ell(M)$, where
  $M$ is the minimum spanning tree of~$S$ and $c_\delta$ is a constant
  depending on~$\delta$.
\end{theorem}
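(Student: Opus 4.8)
The plan is to reduce the problem to computing a $\delta$-spanner and then extracting a shortest-path tree from it. First I would apply the result of Gudmundsson, Levcopoulos, and Narasimhan to the point set~$S$ with the given dilation factor~$\delta$, obtaining in $O(n\log n)$ time a $\delta$-spanner~$G$ of~$S$ with only $c_1 n$ edges and total cost $\ell(G) \le c_3\ell(M)$. Since $G$ is a $\delta$-spanner it is connected, and in particular $d_G(r,v) \le \delta|rv|$ for every $v \in S\setminus\{r\}$.

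Next I would compute the shortest-path tree~$T$ of~$G$ rooted at the source~$r$ using Dijkstra's algorithm. Because $G$ has only $O(n)$ edges, this runs in $O(n\log n)$ time. The tree~$T$ spans all of~$S$ because $G$ is connected, and by construction $d_T(r,v) = d_G(r,v)$ for every vertex~$v$; hence
\[
\Delta(T) = \max_{v\in S\setminus\{r\}} \frac{d_T(r,v)}{|rv|}
= \max_{v\in S\setminus\{r\}} \frac{d_G(r,v)}{|rv|} \le \delta,
\]
so $T$ satisfies the delay bound. For the cost, observe that $T$ is a subgraph of~$G$, so $\ell(T) \le \ell(G) \le c_3\ell(M)$, and we may take $c_\delta := c_3$.

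The only point that requires any care is the running time of the shortest-path computation: Dijkstra's algorithm on a graph with $m$ edges and $n$ vertices takes $O(m + n\log n)$ time, so the $O(n\log n)$ bound hinges on the spanner having a linear number of edges, which is exactly the guarantee supplied by the construction of Gudmundsson, Levcopoulos, and Narasimhan. Everything else is immediate: the delay bound is inherited from the spanner property because a shortest-path tree preserves distances from its root, and the cost bound is inherited because $T$ uses only a subset of the spanner's edges. There is no real obstacle here; the content of the theorem lies entirely in recognizing that a low-cost spanner can be pruned to a low-cost tree without increasing any root-to-vertex distance.
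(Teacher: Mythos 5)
Your proof is correct and follows essentially the same route as the paper: apply the Gudmundsson--Levcopoulos--Narasimhan $\delta$-spanner construction, take the shortest-path tree from~$r$ via Dijkstra, and inherit the delay bound from distance preservation and the cost bound from the subgraph relation, with the $O(n\log n)$ time resting on the spanner's linear number of edges.
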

\begin{proof}
  We first run the algorithm by Gudmundsson et
  al.~\cite{gudmundsson2000improved} to obtain a $\delta$-spanner~$G$
  for~$S$, with total cost $\ell(G) \leq c_\delta \ell(M)$, where
  $c_\delta$ is a constant that depends on~$\delta$.

  We then compute the shortest path tree~$T$ with source~$r$ in~$G$.
  Clearly we have $\ell(T) \leq \ell(G) \leq c_\delta\ell(M)$.
  For any $v \in S \setminus \{r\}$, we have $d_T(r,v) = d_G(r,v) \leq
  \delta |rv|$ since $G$ is a $\delta$-spanner, and so the delay
  $\Delta(T) \leq \delta$.

  $G$ can be computed in time $O(n\log
  n)$~\cite{gudmundsson2000improved}, and the shortest-path tree~$T$
  can be computed in time $O(n \log n)$, for instance using Dijkstra's
  algorithm, using the fact that $G$ has $O(n)$ edges.
\end{proof}

\section{Single-Source Dilation-Bounded Minimum Spanning Tree is
  NP-hard}
\label{sec:np-hard}

In this section, we show that the decision version of our problem is NP-hard.
\begin{theorem}
  \label{thm:np-hard}
  Given a set $S$ of points in the plane, a designated source~$r \in
  S$, a delay bound $\delta \geq 1$, and a cost bound $K \geq 0$,
  it is NP-hard to decide if there exists a tree~$T$ for $S$ with
  delay $\Delta(T) \leq \delta$ and cost $\ell(T) \leq K$.
\end{theorem}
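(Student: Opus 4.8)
The plan is to reduce from \textsc{Knapsack}, as announced in the introduction. Recall an instance of \textsc{Knapsack}: items $1,\dots,m$ with positive integer weights $w_i$ and values $v_i$, a weight capacity $W$, and a value target $V$; we ask whether some subset $I$ has $\sum_{i\in I} w_i \le W$ and $\sum_{i\in I} v_i \ge V$. The idea is to build a point configuration around the source $r$ in which each item $i$ is represented by a small gadget that can be attached to $r$ in one of two ways: a ``cheap'' attachment that costs little in total edge length but forces a long detour path (thus consuming delay budget), and an ``expensive'' direct attachment that adds more edge length but keeps the path to the gadget short. Choosing the cheap attachment for the items in $I$ should correspond to ``packing'' item $i$; the delay constraint at a carefully placed ``critical'' point then encodes the capacity bound $\sum_{i\in I} w_i \le W$, while the cost bound $K$ encodes the value target $\sum_{i\in I} v_i \ge V$ (equivalently, that the items \emph{not} packed do not cost too much).

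Concretely, I would proceed in the following steps. First, place the source $r$ at the origin and, for each item $i$, create a gadget consisting of a constant number of points at distance roughly $D$ from $r$ (for a large common parameter $D$), arranged so that there are exactly two reasonable ways for a spanning tree to connect the gadget to the rest: a direct edge from $r$ of length about $D$, or a ``relay'' through a shared spine whose total added length is smaller by an amount proportional to $v_i$ but whose path length from $r$ exceeds $D$ by an amount proportional to $w_i$. Second, introduce one distinguished ``test'' point $t$ placed so that $|rt|$ is calibrated to make the delay bound $\delta$ at $t$ equivalent to the statement ``the total extra path length accumulated along the spine is at most (something) $\propto W$''; the only way to route to $t$ cheaply is along the spine, so the delay $\Delta_T(r,t) \le \delta$ holds iff the cumulative $\sum_{i\in I} w_i$ does not exceed $W$. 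Third, choose the cost bound $K$ so that $\ell(T) \le K$ holds iff the items routed directly (i.e. $i \notin I$) have small enough total ``value'' slack, i.e. iff $\sum_{i \in I} v_i \ge V$. Fourth, verify the two directions of the reduction: a \textsc{Knapsack} solution yields a tree meeting both bounds by using the cheap attachment exactly for $i\in I$, and conversely any tree meeting both bounds can be assumed (by a local exchange/normalization argument) to use only the two canonical attachment types per gadget, hence induces a feasible \textsc{Knapsack} subset. Finally, check that all coordinates are rational of polynomial bit-length and that the numbers $\delta$ and $K$ are computed in polynomial time, so the reduction is polynomial.

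The main obstacle I expect is the \emph{gadget design and the normalization lemma}: one must ensure that in an optimal (or merely feasible) tree no ``third option'' — e.g. connecting a gadget through another gadget, taking a partial shortcut, or subdividing an edge's role — can do better than the two intended attachments, and that the delay budget is consumed \emph{additively and independently} across gadgets so that it genuinely sums to $\sum_{i\in I} w_i$ rather than being dominated by a single worst gadget (a max, not a sum, is the natural failure mode here, since delay is defined as a maximum over vertices). Handling this typically forces a careful geometric layout — placing all gadgets nearly collinear on one side of $r$ with the spine running through them, and placing $t$ ``past the end'' of the spine so that the path to $t$ must traverse every gadget's contribution — together with a scaling argument (taking $D$ and the inter-gadget spacing large relative to the $w_i, v_i$) so that the only cost-competitive trees are the structured ones. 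Once the gadget is pinned down, the equivalences with the capacity and value constraints, and the polynomial-time bound on bit-complexity, should be routine.
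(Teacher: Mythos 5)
Your overall strategy is the same as the paper's: a reduction from \textsc{Knapsack} with item gadgets strung along a spine emanating from $r$, a ``cheap but slow'' versus ``expensive but fast'' choice per item, and a critical test point placed past the end of the spine whose dilation accumulates the packed weights additively while the cost bound accumulates the profits. However, what you have written is a roadmap that stops exactly at the two places where the actual proof lives, so there is a genuine gap. First, you never exhibit a gadget; you only postulate one with the right properties, and the version you gesture at (a direct edge from $r$ of length about $D$ as the expensive option) is not obviously workable, since such long edges would dominate the cost and interfere with the delay of other gadget points. The paper resolves this with a concrete local gadget: a triangle $a_i b_i c_i$ with $|b_ic_i|=p_i+w_i$, $|a_ic_i|=2p_i+w_i$, $|a_ib_i|=3p_i+w_i$, where $a_i,b_i$ lie on the spine; a spanning tree must drop one triangle edge, and dropping $a_ib_i$ (``packing'' item $i$) lengthens the spine path by exactly $w_i$ while saving exactly $p_i$ in cost. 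Second, the normalization step you flag as ``the main obstacle'' is indeed the heart of the matter and you give no argument for it: one must show that any tree meeting the bounds can be replaced by a ``regular'' tree that is no worse in both cost and delay. The paper does this via an exchange argument (replacing an edge of length at least $\gamma_i$ by $a_ib_i$ when the path visits all of $a_i,b_i,c_i$ without using two triangle edges) combined with a projection-onto-the-$y$-axis lower bound on the path length, plus a separate lemma excluding edges into the test point other than the intended one, and a scaling of the layout (factors of $L$, spine starting at distance $4L$, test point $d_2=(-6L,-8L)$) that forces the test point's dilation ($\geq 1.4$) to dominate all other dilations ($<1.25$). None of this is routine, and your proposal does not supply it.

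There is also a technical issue you dismiss as routine but which requires real work: the off-spine gadget points are defined by distance constraints and generally have irrational coordinates, so the claim ``all coordinates are rational of polynomial bit-length'' is false for the natural construction. The paper handles this with an explicit perturbation analysis (its Lemma~\ref{lem:error}): approximating each $c_i$ to precision $\eps = 1/(600nL)$ changes any tree's cost by less than $12n\eps$ and its delay by less than $20n\eps$, and the bounds $\delta$ and $K$ are then set with slack ($+1/20L$ and $+0.5$, respectively) exploiting the integrality gap of \textsc{Knapsack} (a negative instance misses the weight bound by at least $1$ or the profit bound by at least $1$). Without such an error analysis and the corresponding slack in $\delta$ and $K$, your reduction does not produce a valid NP-hardness instance with polynomial-size integer coordinates.
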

The proof is by reduction from \textsc{Knapsack}, which we define
first.  \textsc{Knapsack} is well known to be
NP-complete~\cite{garey1979computers}.
\begin{quotation}
  \textsc{Knapsack}: Given a set of $n$~items, each having an integer
  profit~$p_i > 0$ and an integer weight~$w_i > 0$, as well as a
  profit bound~$P > 0$ and a weight bound~$W > 0$.  Does there exist a
  subset of items with total weight at most~$W$ and total profit at
  least~$P$?
\end{quotation}
Our reduction takes as input an instance of \textsc{Knapsack}, and
produces an instance of single-source dilation-bounded minimum
spanning tree.  For a set of~$n$ items with profits $p_{i}$ and
weight~$w_{i}$, we construct a set~$S$ of $3n + 4$ points.  For each
item~$i$, we construct three points~$a_{i}, b_{i}, c_{i}$.  In
addition, we create three extra points $d_0$, $d_1$, $d_2$, as well as
a source~$r$.

We first define
\begin{align*}
  \alpha_i & := p_i+w_i > 0\\
  \beta_i & := 2p_i+w_i=\alpha_i+p_i > \alpha_i\\
  \gamma_i & := 3p_i+w_i=\beta_i+p_i > \beta_i.
\end{align*}
We thus have
\[
\alpha_i+\beta_i=3p_i+2w_i=\gamma_i+w_i>\gamma_i,
\]
and so $0<\alpha_i<\beta_i<\gamma_i<\alpha_i+\beta_i$.  This implies
that the three values $\alpha_i$, $\beta_i$ and~$\gamma_i$ satisfy the
triangle inequality.
We also set
\[
m := \max_{1 \leq i \leq n} \gamma_i,\qquad L:=\sum_{i=1}^{n}(\gamma_i+m),
\]

\paragraph{Construction of $S$.}
Our construction starts by placing the source~$r$ at the origin.  We
then place the $2n$~points~$a_{i}$ and~$b_{i}$ on the negative
$y$-axis as follows (see Fig.~\ref{fig:construction}):
\begin{align*}
  a_i &:= \Big(0, -4L-\textstyle\sum_{j=1}^{i-1}(\gamma_{j}+m)\Big)\\
  b_i &:= a_i - (0, \gamma_i)
\end{align*}
\begin{figure}[htb]
  \centerline{\includegraphics{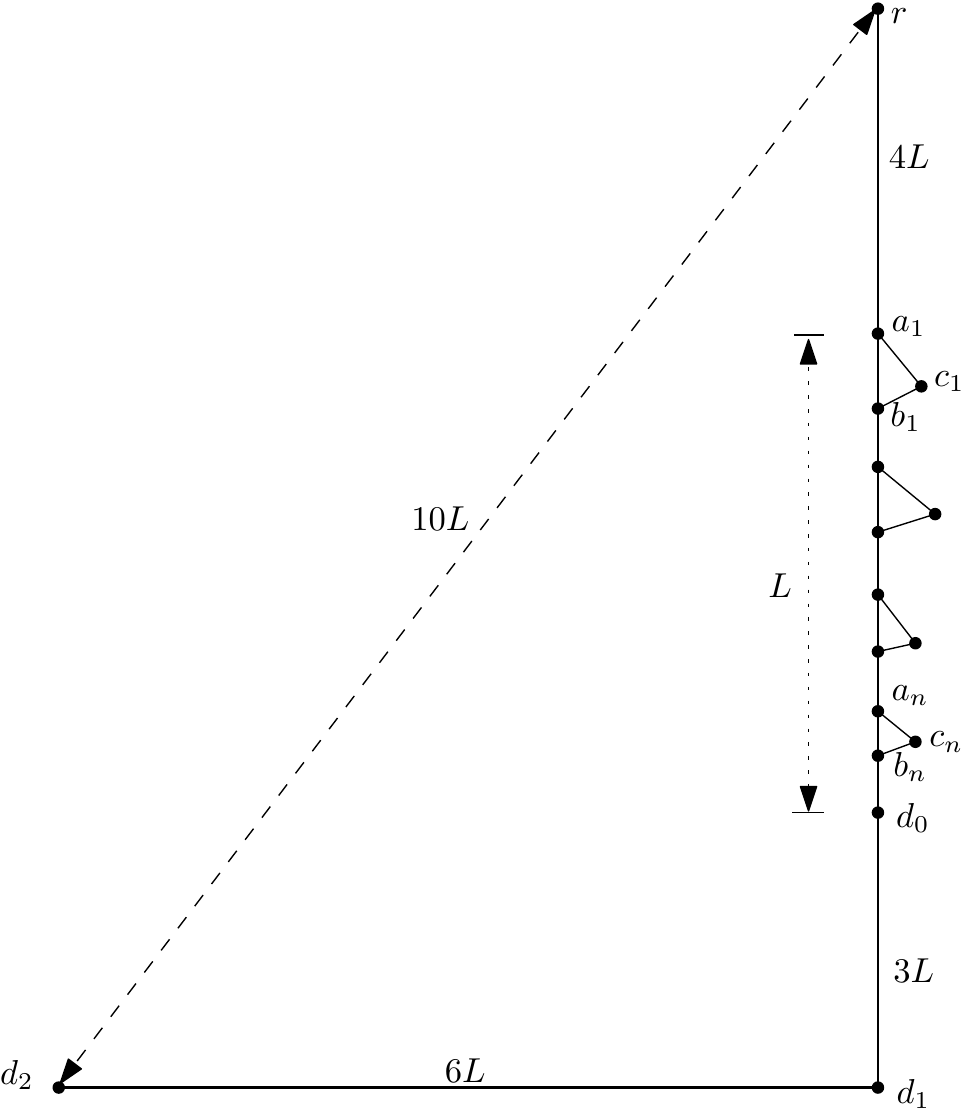}}
  \caption{The construction of $S$}
  \label{fig:construction}
\end{figure}

We have $|a_ib_i| = \gamma_i$ and $|b_ia_{i+1}| = m$.  We can now
place $c_{i}$, for $1 \leq i \leq n$ on the right side of the $y$-axis
such that $|a_ic_i| = \beta_i$ and $|c_ib_i| = \alpha_i$.
Finally, we define $d_{0}$, $d_{1}$, and $d_{2}$ as follows:
\begin{align*}
  d_0 &:= (0, -5L) \\
  d_1 &:= (0, -8L) \\
  d_2 &:= (-6L, -8L)
\end{align*}

\paragraph{Regular trees.}

We classify the edges connecting points of~$S$ into \emph{regular} and
\emph{irregular} edges.  Regular edges are the edges $ra_1$,
$b_{n}d_{0}$, $d_0d_1$, $d_1d_2$, the edges $a_ib_i$, $b_ic_i$,
$a_ic_i$, for $1 \leq i \leq n$, and the edges $b_{i}a_{i+1}$ for $1
\leq i < n$.  A \emph{regular tree} is a spanning tree on~$S$ which
contains only regular edges.  A tree containing irregular edges is
irregular.

In a regular tree~$T$, the dilation of the pair $(r, d_2)$ is much
larger than the dilation of any other pair~$(r, v)$, and so we have
\begin{lemma}
  \label{lem:regular}
  The delay of a regular tree~$T$ is $\Delta(T) = \Delta_T(r, d_2)$.
\end{lemma}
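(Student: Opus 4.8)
To prove the lemma it suffices to check that $\Delta_T(r,v)\le\Delta_T(r,d_2)$ for every $v\in S\setminus\{r\}$, since $\Delta(T)$ is by definition the maximum of these ratios. My first step is to pin down the shape of an arbitrary regular spanning tree~$T$. The vertices $r$ and $d_2$ each have a unique incident regular edge, and a short case analysis of which regular edges can attach the points $d_0,d_1,d_2$ and the item gadgets $\{a_i,b_i,c_i\}$ to the rest shows that $ra_1$, $b_nd_0$, $d_0d_1$, $d_1d_2$, and the $n-1$ edges $b_ia_{i+1}$ all lie in~$T$. These are $n+3$ edges; since a spanning tree on the $3n+4$ points has $3n+3$ edges, exactly $2n$ of the $3n$ triangle edges $a_ib_i,a_ic_i,b_ic_i$ are used, and since three of them would create a cycle while fewer than two would disconnect~$T$, each triangle contributes exactly two. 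For each~$i$ this leaves three possibilities: in two of them $a_ib_i\in T$ and $c_i$ hangs off $a_i$ or $b_i$, while in the third $T$ uses the two-edge path $a_ic_ib_i$, of length $\beta_i+\alpha_i=\gamma_i+w_i$, in place of~$a_ib_i$. Let $x_i\in\{0,1\}$ record whether the third option is taken, and set $s:=\sum_i x_iw_i$; since $w_i<\gamma_i$ and $m>0$ we have $s\le\sum_i w_i<\sum_i\gamma_i<\sum_i(\gamma_i+m)=L$.

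Next I compute the distances from~$r$. The unique $r$–$d_2$ path in~$T$ is $r\to a_1\to b_1\to a_2\to\cdots\to b_n\to d_0\to d_1\to d_2$, and its portion between $a_i$ and $b_i$ has length $\gamma_i+x_iw_i$; using $|ra_1|=4L$, $|b_ia_{i+1}|=|b_nd_0|=m$, $|d_0d_1|=3L$, $|d_1d_2|=6L$, and $\sum_i\gamma_i=L-nm$, this sums to $d_T(r,d_2)=14L+s$. Since $|rd_2|=\sqrt{36L^2+64L^2}=10L$ we obtain $\Delta_T(r,d_2)=(14L+s)/(10L)\ge 7/5$. For a point $v\in\{a_i,b_i,d_0,d_1\}$ on the negative $y$-axis the $r$–$v$ path runs along the axis except for detours through those $c_j$ with $x_j=1$, each adding an extra~$w_j$; hence $d_T(r,v)=|rv|+s_v$ with $0\le s_v\le s$. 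As every such~$v$ has $|rv|\ge|ra_1|=4L$, we get $\Delta_T(r,v)=1+s_v/|rv|<1+L/(4L)=5/4<7/5\le\Delta_T(r,d_2)$.

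The only remaining—and delicate—case is $v=c_i$, which lies off the axis. Write $A_i:=\sum_{j=1}^{i-1}(\gamma_j+m)$, so that $|ra_i|=4L+A_i$ and $|rb_i|=4L+A_i+\gamma_i$; then the triangle inequality gives $|rc_i|\ge|rb_i|-|b_ic_i|=(4L+A_i+\gamma_i)-\alpha_i=:D'$, and $D'\ge 4L$ because $\gamma_i>\alpha_i$. On the other side, among the three ways $c_i$ can be attached the one maximising $d_T(r,c_i)$ has $a_ib_i\in T$ with $c_i$ hanging off $b_i$, giving $d_T(r,c_i)=d_T(r,a_i)+\gamma_i+\alpha_i\le(|ra_i|+s)+\gamma_i+\alpha_i=(4L+A_i+s)+\gamma_i+\alpha_i=D'+2\alpha_i+s$, where the bound $d_T(r,a_i)\le|ra_i|+s$ is the same detour count as before and $\gamma_i+\alpha_i>\beta_i$ rules out the other two attachments. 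Hence $\Delta_T(r,c_i)\le(D'+2\alpha_i+s)/D'$, and cross-multiplying against $\Delta_T(r,d_2)=(14L+s)/(10L)$ and using $D'\ge 4L$, the inequality $\Delta_T(r,c_i)\le\Delta_T(r,d_2)$ reduces to $20\alpha_i+6s\le 16L$. This is where the remaining facts are used: $\alpha_i<\gamma_i\le m\le L/2$—the bound $m\le L/2$ holding since $L\ge\gamma_{i^\ast}+m=2m$ for the index $i^\ast$ with $\gamma_{i^\ast}=m$—and $s<L$, so $20\alpha_i+6s<10L+6L=16L$. I expect this last estimate to be the main obstacle: the $c_i$ bound is tight up to small constants, so one cannot afford to bound $\Delta_T(r,c_i)$ by an absolute constant and must keep the $s$-terms on both sides, and it is precisely the inequality $m\le L/2$ (and hence the particular constants chosen in the construction) that makes it close; everything else is routine summation over the forced spine.
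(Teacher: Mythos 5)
Your proof is correct, and its overall strategy coincides with the paper's: show that every vertex other than $d_2$ has dilation at most $\Delta_T(r,d_2)\ge 1.4$. The execution differs in two ways. First, you pin down the exact structure of a regular tree (the forced spine edges plus exactly two edges per triangle $a_ib_ic_i$) and compute $d_T(r,d_2)=14L+s$ exactly, whereas the paper never needs this characterization here: it only uses the upper bound obtained by charging every gadget with the worst-case detour, $d_T(r,v)< 5L+L_i$, against the lower bound $|rv|\ge 4L+L_i$, giving a uniform dilation bound of $1.25<1.4$ for all $v\notin\{d_2\}$. (Your exact structure is essentially re-derived by the paper later, inside Lemma~\ref{lem:regular-best} and Lemma~\ref{lem:reduction}, so it is not wasted work, but it is more than this lemma requires.) Second, your handling of the $c_i$ case by cross-multiplying against $(14L+s)/(10L)$ and reducing to $20\alpha_i+6s\le 16L$ is valid but unnecessarily delicate; your closing remark that one \emph{cannot} bound $\Delta_T(r,c_i)$ by an absolute constant is mistaken. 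The paper bounds it by $1.25$: since $|rc_i|>|ra_i|=4L+L_i$ and $d_T(r,c_i)<5L+L_i$ (the detour sum and the attachment cost $\gamma_i+\alpha_i$ together stay below $L+L_i$, using $\alpha_j,\beta_j<\gamma_j\le m$), one gets $\Delta_T(r,c_i)<(5L+L_i)/(4L+L_i)\le 1.25<1.4\le\Delta_T(r,d_2)$, which is the same one-line argument as for the axis points. So your proof is sound, just heavier than needed in that final case.
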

\begin{proof}
Let $L_i=|a_1a_i|=\sum_{j=1}^{i-1}(\gamma_i+m)$.  Since $T$ is a tree,
there is a unique path from $b_i$ to $a_{i+1}$ and we have $d_T(a_i,
b_i) \leq |a_ic_i|+|c_ib_i|$.
Since $\alpha_i<\beta_i<\gamma_i \leq m$, we have
\begin{align*}
  d_T(r,b_i) &\leq
  |r,a_1|+\sum_{j=1}^i(|a_jc_j|+|c_jb_j|)+\sum_{j=1}^{i-1}|b_ja_{j+1}|\\
  & = 4L+\sum_{j=1}^i(\beta_j+\alpha_j)+\sum_{j=1}^{i-1}m < 4L+L+L_i=5L+L_i\\
  d_T(r,c_i)&\leq
  |r,a_1|+\sum_{j=1}^{i-1}(|a_jc_j|+|c_jb_j|+|b_ja_{j+1}|)+|a_ib_i|+|b_ic_i|\\
  &= 4L+\sum_{j=1}^{i-1}(\beta_j+m)+(\sum_{j=1}^i\alpha_j+\gamma_i)<4L+L_i+L=5L+L_i
\end{align*}
Note that $d_T(r,b_i)=d_T(r,a_i)+d_T(a_i,b_i)>d_T(r,a_i)$. Thus, we get
\begin{align*}
  \Delta_T(r, a_i) &= {{d_T(r,a_i)} \over |ra_i|}
  < {{d_T(r,b_i)} \over |ra_i|}
  \leq {{5L+L_i} \over {4L+L_i}} \leq 1.25\\
  \Delta_T(r, b_i)&= {{d_T(r,b_i)} \over |rb_i|}
  < {{d_T(r,b_i)} \over |ra_i|} \leq 1.25.
\end{align*}
Since $|rc_i|>|ra_i|$, we also have
\begin{align*}
\Delta_T(r, c_i) &= {{d_T(r,c_i)} \over |rc_i|}
< {d_T(r,c_i) \over |ra_i|}={{5L+L_i} \over {4L+L_i}}\leq 1.25 \\
\Delta_T(r, d_0) &= {d_T(r,d_0) \over |rd_0|}={d_T(r,b_n)+m \over
  5L}<{5L+L_n+m \over 5L}< {6L \over 5L}< 1.25\\
\Delta_T(r, d_1) &= {{d_T(r,d_1)} \over {|rd_1|}}={{d_T(r,d_0)+|d_0d_1|}
  \over {|rd_0|+|d_0d_1|}}
\leq{{6L+|d_0d_1|} \over 5L+|d_0d_1|}< 1.25.
\end{align*}
On the other hand,
\[
\Delta_T(r, d_2) = {d_T(r,d_2) \over |rd_2|}\geq {|rd_1|+|d_1d_2| \over
  |rd_2|}= {14L \over 10L}=1.4
\]
Hence, the delay of~$T$ is determined by~$d_2$:
\[
\Delta(T)=\max_{v \in S \setminus\{r\}} \Delta_{T}(r, v) =
\Delta_T(r, d_2) \qedhere
\]
\end{proof}

We define a special regular tree, the \emph{base tree}~$T_{0}$, which
contains all regular edges except for the edges~$a_{i}c_{i}$.
Delay and cost of the base tree are as follows.
\begin{lemma}
  \label{lem:base}
  The total edge length of the base tree is $\ell(T_{0}) < 14.5 L$,
  and its delay is $\Delta(T_{0}) = 1.4$.
\end{lemma}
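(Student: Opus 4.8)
The plan is to first pin down the combinatorial structure of $T_0$, then read off the cost by summing edge lengths, and finally get the delay for free from Lemma~\ref{lem:regular}. Since $T_0$ consists of all regular edges except the $a_ic_i$'s, its edge set is $\{ra_1\}\cup\{a_ib_i,\,b_ic_i : 1\le i\le n\}\cup\{b_ia_{i+1} : 1\le i<n\}\cup\{b_nd_0,d_0d_1,d_1d_2\}$, i.e.\ $T_0$ is the ``spine'' path $r,a_1,b_1,a_2,b_2,\ldots,a_n,b_n,d_0,d_1,d_2$ with the $n$ leaves $c_i$ hung off the $b_i$'s. In particular $T_0$ is a spanning tree of $S$ (it is connected and has $3n+3$ edges on $3n+4$ vertices), and it is regular by definition.

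For the cost I would simply add up the edge lengths, using the coordinates from the construction: $|ra_1|=4L$, $|a_ib_i|=\gamma_i$, $|b_ic_i|=\alpha_i$, $|b_ia_{i+1}|=|b_nd_0|=m$, $|d_0d_1|=3L$, $|d_1d_2|=6L$. This gives
\[
\ell(T_0)=4L+\sum_{i=1}^n\gamma_i+\sum_{i=1}^n\alpha_i+nm+3L+6L
=13L+\sum_{i=1}^n(\gamma_i+m)+\sum_{i=1}^n\alpha_i=14L+\sum_{i=1}^n\alpha_i,
\]
where the last step uses $L=\sum_{i=1}^n(\gamma_i+m)$. The one point I would flag as the real content of the cost estimate is the bound $\sum_i\alpha_i<L/2$: since $m=\max_j\gamma_j\ge\gamma_i>\alpha_i$ for every $i$, each term satisfies $\gamma_i+m>2\alpha_i$, and summing over $i$ yields $L>2\sum_i\alpha_i$. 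Hence $\ell(T_0)<14L+L/2=14.5L$.

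For the delay, $T_0$ is a regular tree, so Lemma~\ref{lem:regular} gives $\Delta(T_0)=\Delta_{T_0}(r,d_2)$, and it remains to compute the length of the unique $r$–$d_2$ path in $T_0$. That path is exactly the spine (it does not enter any leaf $c_i$), of length
\[
d_{T_0}(r,d_2)=4L+\sum_{i=1}^n\gamma_i+nm+3L+6L=13L+\sum_{i=1}^n(\gamma_i+m)=14L,
\]
while $|rd_2|=\sqrt{(6L)^2+(8L)^2}=10L$, so $\Delta_{T_0}(r,d_2)=14L/10L=1.4$. I do not expect any genuine obstacle in this argument: everything is bookkeeping with the prescribed coordinates, the only non-mechanical step being the inequality $\sum_i\alpha_i<L/2$ above, which is what forces the constant $14.5$ rather than the weaker $15$.
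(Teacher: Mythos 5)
Your proof is correct and follows essentially the same route as the paper: sum the edge lengths of $T_0$, bound $\sum_i\alpha_i<\tfrac12\sum_i(\gamma_i+m)=L/2$ to get $\ell(T_0)<14.5L$, and invoke Lemma~\ref{lem:regular} so that the delay is just the $r$–$d_2$ path length $14L$ divided by $|rd_2|=10L$. The only cosmetic difference is that the paper collapses the spine from $r$ to $d_1$ to $|rd_1|=8L$ at once (all those edges lie on the $y$-axis), while you sum them edge by edge — same computation either way.
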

\begin{proof}
  The total edge length and the delay of the base tree $T_0$ are
  \begin{align*}
    \ell(T_0) &= |rd_1|+\sum_{i=1}^{n}(|b_ic_i|)+|d_1d_2|\\
    &= 8L+{1 \over 2}\sum_{i=1}^{n}2\alpha_i+6L
    < 14L+{1 \over 2}\sum_{i=1}^{n}(\gamma_i+m)=14.5L \\
    \Delta(T_0) & = \Delta_{T_0}(r, d_2)= {{|rd_1|+|d_1d_2|} \over {|rd_2|}}
    = {{8L+6L} \over {10L}}
    = 1.4 \qedhere
  \end{align*}
\end{proof}
Connecting~$d_{2}$ to any point other than~$d_{1}$ will always produce
trees with higher cost than the base tree:
\begin{lemma}
  \label{lem:exclude-d2}
  If $T$ is an irregular tree that contains an edge $vd_2$ for $v
  \neq d_1$, then $\ell(T) > \ell(T_{0})$.
\end{lemma}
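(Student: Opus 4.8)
The plan is to bound $\ell(T)$ from below by a quantity exceeding the bound $\ell(T_0)<14.5L$ of Lemma~\ref{lem:base}, splitting on whether $d_2$ is a leaf of~$T$. First I would record two facts that follow directly from the coordinates. (i)~Every point of $S$ other than $d_1$ and $d_2$ has $y$-coordinate at least $-5L$: the identity $\sum_i(\gamma_i+m)=L$ places $b_n$ at $y=-5L+m$, so all the $a_i,b_i$ lie between $y=-4L$ and $y=-5L+m$; each $c_i$ lies within distance $\alpha_i<m<L$ of $b_i$, hence also above $y=-5L$; and $d_0$ is at $y=-5L$, whereas $d_1$ and $d_2$ sit at $y=-8L$. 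Since the horizontal distance to $d_2$ is exactly $6L$ from any point on the axis and strictly more from each $c_i$, this gives $|wd_2|\ge|d_1d_2|=6L$ for every $w\in S\setminus\{d_2\}$, and the sharper estimate $|vd_2|\ge|d_0d_2|=3\sqrt5\,L$ for every $v\in S\setminus\{d_1\}$. (ii)~$|rw|\ge|ra_1|=4L$ for every $w\in S$, since every point of $S$ lies on the $y$-axis at height $\le-4L$ or off it at height $<-4L$.

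In the first case, $d_2$ is a leaf of $T$, so $vd_2$ is its only edge and $T-d_2$ is a spanning tree of $S\setminus\{d_2\}$; in particular it contains the path $P$ from $r$ (at $y=0$) to $d_1$ (at $y=-8L$). Using $|uu'|\ge|y_u-y_{u'}|$ and summing along $P$, we get $\ell(P)\ge|y_r-y_{d_1}|=8L$, hence $\ell(T-d_2)\ge 8L$. Therefore $\ell(T)=\ell(T-d_2)+|vd_2|\ge 8L+3\sqrt5\,L>14.5L>\ell(T_0)$.

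In the second case, $\deg_T(d_2)\ge 2$, so $T$ has at least two edges at $d_2$; by~(i), the edge $vd_2$ has length $\ge 3\sqrt5\,L$ (as $v\ne d_1$) and every other edge at $d_2$ has length $\ge 6L$. If $rd_2\notin T$, then $T$ also has an edge at $r$ not incident to $d_2$, of length $\ge 4L$ by~(ii); these three edges are distinct, so $\ell(T)\ge 3\sqrt5\,L+6L+4L>14.5L$. If $rd_2\in T$, then $rd_2$ is itself one of the edges at $d_2$, of length $|rd_2|=10L$, so the edges at $d_2$ already total at least $10L+6L=16L$. In both subcases $\ell(T)>14.5L>\ell(T_0)$.

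I do not anticipate a genuine obstacle; once (i) and (ii) are in hand the argument is elementary. The only care needed is bookkeeping --- keeping the edges used in each estimate pairwise distinct, which is exactly why the second case is split on whether $rd_2\in T$ --- and checking the numeric slack, which is comfortable since $8+3\sqrt5=8+\sqrt{45}>14.5$. The one step where the specifics of the construction matter is the distance bound $|vd_2|\ge 3\sqrt5\,L$ for $v\ne d_1$: it uses the placements $d_0=(0,-5L)$ and $d_1,d_2$ at height $-8L$, together with the identity $\sum_i(\gamma_i+m)=L$ that pins down the lowest cluster point $b_n$.
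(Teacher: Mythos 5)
Your proof is correct and takes essentially the same route as the paper: both rest on the bound $|vd_2|\ge|d_0d_2|=3\sqrt{5}\,L$ for $v\ne d_1$, the $8L$ cost of connecting $r$ to $d_1$, and the comparison with $\ell(T_0)<14.5L$ from Lemma~\ref{lem:base}. The only difference is bookkeeping: the paper splits on whether the $r$--$d_1$ path passes through $d_2$ (yielding $16L$ or $8L+3\sqrt{5}\,L$), while you split on the degree of $d_2$ and on whether $rd_2\in T$, arriving at the same totals.
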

\begin{proof}
  Assume $T$ contains an edge $vd_{2}$, for $v \neq d_1$, and consider
  the path~$Q$ in~$T$ connecting~$r$ and~$d_{1}$.  There are two
  possible cases.  First, assume that $Q$ passes through~$d_2$. Then
  we have
  \[
  \ell(T) \geq \ell(Q) \geq |rd_2|+|d_2d_1|=10L+6L=16L>\ell(T_0).
  \]
  In the second case, $Q$ does not pass through~$d_2$.  Since $|vd_2|
  \geq |d_0d_2|$ and $\ell(Q) \geq |rd_1|$, we have
  \[
  \ell(T) \geq \ell(Q)+|vd_2| \geq |rd_1|+|d_0d_2|=8L+3\sqrt{5}L \simeq
  14.7082L > \ell(T_0). \qedhere
  \]
\end{proof}

We can now show that for $\delta \geq 1.4$, regular trees are
better than irregular trees.
\begin{lemma}
  \label{lem:regular-best}
  For every irregular tree~$T_1$ with $\Delta(T_1) \geq 1.4$, there
  exists a regular tree $\Ts$ such that $\ell(T_1) \geq \ell(\Ts)$ and
  $\Delta(T_1) \geq \Delta(\Ts)$.
\end{lemma}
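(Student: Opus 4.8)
The plan is: dispose of the case where $T_1$ is no cheaper than the base tree by simply taking $\Ts=T_0$, and in the remaining case -- where the cost budget $\ell(T_1)<\ell(T_0)<14.5L$ is tight -- straighten $T_1$ into a regular tree by a sequence of local surgeries, each of which increases neither the cost nor the delay. Indeed, if $\ell(T_1)\ge\ell(T_0)$ then $\Ts:=T_0$ is regular, $\ell(\Ts)=\ell(T_0)\le\ell(T_1)$, and $\Delta(\Ts)=1.4\le\Delta(T_1)$ by Lemma~\ref{lem:base} and the hypothesis; so assume $\ell(T_1)<\ell(T_0)$. Then the contrapositive of Lemma~\ref{lem:exclude-d2} forces $d_2$ to be a leaf of $T_1$ whose only neighbour is $d_1$, so that $\Delta_{T_1}(r,d_2)=(d_{T_1}(r,d_1)+6L)/(10L)$.

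The heart of the argument is to show that this tight budget forces a very rigid skeleton. The points split into three groups that are pairwise $\Omega(L)$ apart -- the source near the origin, the cluster $\{a_i,b_i,c_i\}$ in the band $-5L\le y\le-4L$, and $\{d_0,d_1\}$ near $y=-8L$ -- while consecutive item gadgets are themselves separated by gaps of length $m$. Using the explicit distances ($|a_ib_i|=\gamma_i$, $|a_ic_i|=\beta_i$, $|b_ic_i|=\alpha_i$, $|b_ia_{i+1}|=|b_nd_0|=m$, $|ra_1|=4L$, $|d_0d_1|=3L$) and the triangle inequality, I would prove that a tree with $d_2$ pendant on $d_1$ and total cost below $14.5L$ must look like this: the unique path from $r$ to $d_1$ is $r,a_1$, then for each $i=1,\dots,n$ in order either the edge $a_ib_i$ or the pair of edges $a_ic_i,c_ib_i$, then $b_i$, then $a_{i+1}$ via $b_ia_{i+1}$, ending $b_n,d_0,d_1$; and each $c_i$ not on this path is a pendant vertex attached to $a_i$ or $b_i$. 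Every alternative feature -- an edge bridging two of the three groups, an edge skipping an item, visiting the gadgets out of order, or a $c_i$ attached internally or to a distant vertex -- is long enough that the cost already exceeds $14.5L>\ell(T_0)$, a contradiction. Write $D$ for the set of items whose gadget the path traverses via $c_i$.

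It remains to clean up and conclude. The only irregular edges left are of one kind: for some $i\notin D$, the pendant $c_i$ may hang on $a_i$ (edge $a_ic_i$) instead of on $b_i$ (edge $b_ic_i$); re-hanging it on $b_i$ replaces an edge of length $\beta_i$ by one of length $\alpha_i<\beta_i$ and alters only $d_T(r,c_i)$, so the cost strictly decreases, and -- since an estimate as in the proof of Lemma~\ref{lem:regular} keeps $\Delta_T(r,c_i)<1.4\le\Delta(T_1)$ -- the delay does not increase. After finitely many such moves, $T_1$ has become the regular tree $\Ts$ that detours through $c_i$ at exactly the items of $D$. Now $\ell(\Ts)\le\ell(T_1)$ since no move increased the cost; and since the path from $r$ to $d_1$ in $T_1$ already realized exactly the detours of $D$, each costing $(\beta_i+\alpha_i)-\gamma_i=w_i$ beyond the straight edge $a_ib_i$, we get $d_{T_1}(r,d_1)=8L+\sum_{i\in D}w_i$. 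By Lemma~\ref{lem:regular}, $\Delta(\Ts)=\Delta_{\Ts}(r,d_2)=(8L+\sum_{i\in D}w_i+6L)/(10L)=\Delta_{T_1}(r,d_2)\le\Delta(T_1)$, which completes the proof.

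The main obstacle is the structural step. It is a lengthy case distinction, and for each offending configuration one must exhibit a numerical lower bound that pushes the total cost past $\ell(T_0)$. The delay bookkeeping, by contrast, is straightforward once the skeleton is fixed, because the dilation of $d_2$ then dominates that of every other vertex and is determined solely by the length of the path from $r$ to $d_1$.
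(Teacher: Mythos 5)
Your opening reduction is fine and coincides with the paper's first step: if $\ell(T_1)\ge\ell(T_0)$ you may take $\Ts=T_0$ (Lemma~\ref{lem:base}), and otherwise Lemma~\ref{lem:exclude-d2} forces $d_2$ to be attached only to $d_1$. The gap is the structural step that carries the rest of your argument. It is not true that a tree with $d_2$ pendant on $d_1$ and cost below $14.5L$ (or even below $\ell(T_0)$) must have the rigid skeleton you describe, and your justification fails: each offending configuration you list adds only an amount of order $m$ or $\gamma_i$ to the cost, not an amount of order $L$, and this surcharge can be absorbed by routing the $r$--$d_1$ path through other $c_j$'s, since each such detour saves $\gamma_j-\beta_j=p_j$. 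Concretely, take all items with $p_i=w_i=1$, so $\alpha_i=2$, $\beta_i=3$, $\gamma_i=4$, $m=4$, $L=8n$, and $\ell(T_0)=14L+2n$. The spanning tree that uses the irregular edge $rb_1$, hangs $a_1$ and $c_1$ as pendants on $b_1$, routes the path through $a_jc_jb_j$ for every $j\ge 2$, and otherwise keeps the regular edges $b_ia_{i+1}$, $b_nd_0$, $d_0d_1$, $d_1d_2$, has $d_2$ pendant on $d_1$, delay at least $1.4$, and cost $13L+9n+5<14L+2n=\ell(T_0)$ once $n\ge 6$; yet its $r$--$d_1$ path skips $a_1$. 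So your claimed dichotomy is false, and with it the exact identity $d_{T_1}(r,d_1)=8L+\sum_{i\in D}w_i$ on which your delay comparison rests.

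The paper avoids any such classification of cheap trees. It considers the family $\mathcal{T}$ of spanning trees with $d_2$ adjacent only to $d_1$, cost at most $\ell(T_1)$, and $r$--$d_1$ path no longer than $\pi(T_1)$; picks $T_2\in\mathcal{T}$ whose set $J(T_2)$ of ``mixed'' gadget indices is inclusion-minimal; and kills any remaining index by a local exchange (add $uw$, delete the path edge $vp$ incident to the middle vertex, using $|vp|\ge\gamma_i\ge|uw|$), contradicting minimality. With $J(T_2)=\emptyset$, it defines $\Ts$ from the set $I$ of gadgets traversed via $c_i$, proves $\ell(\pi(\Ts))\le\ell(\pi(T_2))$ by projecting the path edges onto the $y$-axis, and proves $\ell(\Ts)\le\ell(T_2)$ by charging at least $\alpha_i$ for each off-path gadget vertex with $i\notin I$. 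These arguments apply to arbitrary trees of small cost, which is exactly what your approach lacks; to salvage your route you would have to replace the rigidity claim with exchange/projection-style bounds of this kind rather than a case analysis that assumes the ideal skeleton.
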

\begin{proof}
  If $T_1$ includes an edge $vd_{2}$ with $v \neq d_1$, then
  Lemma~\ref{lem:exclude-d2} implies the lemma with $\Ts = T_0$.
  We can therefore assume that $T_1$ contains the edge $d_1d_2$, and no
  other edge incident to~$d_{2}$.

  For a spanning tree~$T$ of~$S$, let $\pi(T)$ denote the path
  connecting~$r$ and~$d_{1}$ in~$T$.  We define $J(T)$ to be the set
  of indices $i \in \{1,\dots,n\}$ such that $\pi(T)$ contains all
  three vertices~$a_{i}$, $b_{i}$, and~$c_{i}$, but does not contain
  both edge~$a_{i}c_{i}$ and~$b_{i}c_{i}$.

  Let $\mathcal{T}$ denote the set of spanning trees~$T$ of~$S$ such
  that $d_2$ is adjacent only to $d_{1}$ in~$T$, $\ell(T) \leq
  \ell(T_1)$, and $\ell(\pi(T)) \leq \ell(\pi(T_1))$.  We have $T_1
  \in \mathcal{T}$, so $\mathcal{T} \neq \emptyset$.

  We now pick a spanning tree~$T_2\in \mathcal{T}$ such that $J(T_2)$
  is minimal under inclusion.  Let us suppose first that $J(T_2) \neq
  \emptyset$.  Then there is an $i \in J(T_2)$ such that $\pi(T_2)$
  contains $\{a_i,b_i,c_i\}$, but does not contain both $a_{i}c_{i}$
  and $b_{i}c_{i}$.  Let $u$ be the first of the three vertices
  encountered by~$\pi(T_2)$, let $v$ be the second one, and let $w$ be
  the last one.  One of the two edges incident to~$v$ in~$\pi(T_2)$ is
  different from $a_{i}c_{i}$ and $b_{i}c_{i}$.  Denote this edge
  by~$vp$.  Then $|vp| \geq \gamma_{i} \geq |uw|$.  We obtain a new
  spanning tree $T_3$ from~$T_2$ by adding~$uw$ and removing~$vp$.  We
  have $\ell(T_3) = \ell(T_2) + |uw| - |vp| \leq \ell(T_2) \leq
  \ell(T_1)$.  Clearly $\ell(\pi(T_3)) \leq \ell(\pi(T_2))$, and so
  $T_3 \in \mathcal{T}$.  Furthermore, $J(T_3) \subsetneq J(T_2)$ as
  $i \not\in J(T_3)$, a contradiction to the choice of~$T_2$.

  It follows that $J(T_2) = \emptyset$.  Let us define the set $I
  \subset \{1,\dots,n\}$ of indices~$i$ such that $\pi(T_2)$ contains
  both edge~$a_ic_i$ and~$c_ib_i$.  We define $\Ts$ as the tree
  consisting of all regular edges, except that we remove $a_{i}b_{i}$
  when $i \in I$, and that we remove $a_{i}c_{i}$ when $i \not\in
  I$. Fig.~\ref{fig:regular-best} shows an example of the regular tree
  we construct.
  \begin{figure}[h]
    \centerline{\includegraphics{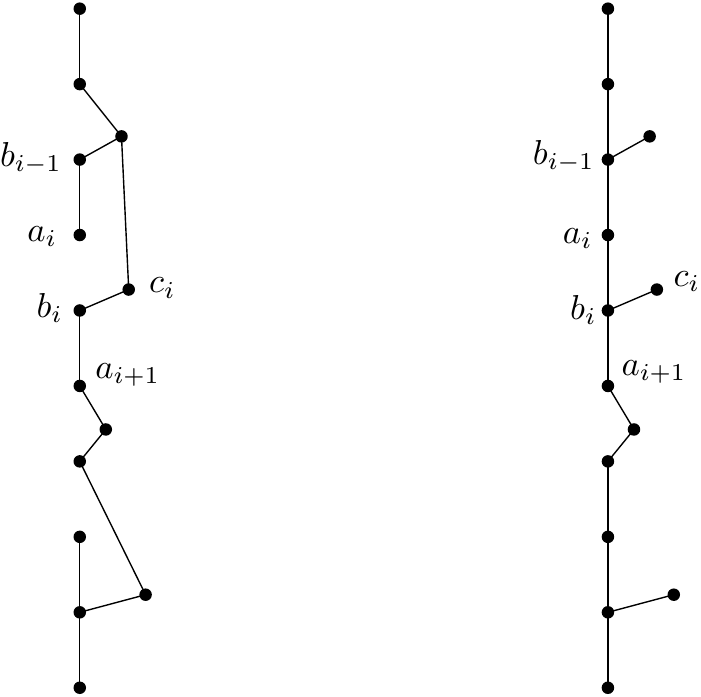}}
    \caption{Irregular tree $T_2$ (left) and its regular tree $\Ts$ (right)}
    \label{fig:regular-best}
  \end{figure}
  Let $E_{0}$ denote the set of edges $a_{i}c_{i}$ and $c_{i}b_{i}$
  for all $i \in I$.  By definition of~$I$, $E_{0} \subset \pi(T_2)$,
  and we define $E_{1} = \pi(T_2) \setminus E_{0}$.  Let
  $E_{0}^{\ast}$ and $E_{1}^{\ast}$ be the projection of $E_{0}$ and
  $E_{1}$ on the~$y$-axis. Then $E_{0}^{\ast} \cup E_{1}^{\ast}$ must
  be equal to the segment~$rd_1$, and so $\ell(E_{1}) \geq
  \ell(E_{1}^{\ast}) \geq |rd_1| - \ell(E_0^{\ast})$.  It follows that
  \[
  \ell(\pi(\Ts)) = |rd_1| - \ell(E_0^{\ast}) + \ell(E_0)
  \leq \ell(E_1) + \ell(E_0) = \ell(\pi(T_2)) \leq \ell(\pi(T_1)).
  \]
  This implies that $\Delta(T_1) \geq \Delta_{T_1}(r, d_2) \geq
  \Delta_{\Ts}(r, d_2) = \Delta(\Ts)$ by Lemma~\ref{lem:regular}.

  For each point $v \in S \setminus \{r\}$, let $p(v)$ be the second
  vertex on the path from~$v$ to~$r$ in~$T_2$.  We have $\ell(T_2) =
  \sum_{v \in S\setminus\{r\}} |vp(v)|$.  For the vertices
  on~$\pi(T_2)$, we have $\sum_{v \in \pi(T_2)\setminus\{r\}}|vp(v)| =
  \ell(\pi(T_2))$.  Since $J(T_2)=\emptyset$, for each $i \not\in I$,
  one of the three vertices $a_{i}$, $b_{i}$, $c_{i}$ is \emph{not} on
  the path~$\pi(T_2)$.  Since its nearest vertex has distance at
  least~$\alpha_i$, we have
  \[
  \ell(T_2) = \sum_{v \in S\setminus\{r\}} |vp(v)|
  \geq \ell(\pi(T_2)) + \sum_{i \not\in I}\alpha_i + |d_2d_1|.
  \]
  On the other hand, we have
  \begin{align*}
  \ell(\Ts) & = \ell(\pi(\Ts)) + \sum_{i\not\in I}\alpha_i + |d_2d_1| \\
  & \leq \ell(\pi(T_2)) + \sum_{i\not\in I}\alpha_i + |d_2d_1|
  \leq \ell(T_2) \leq \ell(T_1).  \qedhere
  \end{align*}
\end{proof}

\paragraph{Correctness of the reduction.}

It remains to show that the constructed point set~$S$ has a spanning
tree of small delay and small cost if and only if the original
\textsc{Knapsack} instance had a positive answer.
\begin{lemma}
  \label{lem:reduction}
  The \textsc{Knapsack} instance has a positive answer if and only if
  there is a spanning tree~$T$ for~$S$ with delay $\Delta(T) \leq 1.4
  + (W / 10L)$ and cost $\ell(T) \leq \ell(T_{0}) - P$.
\end{lemma}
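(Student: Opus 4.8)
The plan is to build a correspondence between subsets of the knapsack items and a family of regular trees, arranged so that the cost of a tree records the total profit of its subset and the delay records the total weight. For $A\subseteq\{1,\dots,n\}$ let $T_A$ be the regular tree that contains, for each $i\in A$, the two edges $a_ic_i$ and $b_ic_i$ (but not $a_ib_i$), and for each $i\notin A$ the two edges $a_ib_i$ and $b_ic_i$ (but not $a_ic_i$); thus $T_\emptyset=T_0$. That $T_A$ is a spanning tree is immediate, since the backbone edges $ra_1$, $b_ia_{i+1}$, $b_nd_0$, $d_0d_1$, $d_1d_2$ are always present and the two block edges chosen for item~$i$ join $a_i$, $b_i$ and $c_i$. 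I would then prove the two identities
\[
\ell(T_A)=\ell(T_0)-\sum_{i\in A}p_i,\qquad
\Delta(T_A)=1.4+\frac{1}{10L}\sum_{i\in A}w_i.
\]
The cost identity is a block-by-block comparison with $T_0$: replacing $\{a_ib_i,b_ic_i\}$ (total length $\gamma_i+\alpha_i$) by $\{a_ic_i,b_ic_i\}$ (total length $\beta_i+\alpha_i$) decreases the cost by $\gamma_i-\beta_i=p_i$. The delay identity uses Lemma~\ref{lem:regular}: since $T_A$ is regular, $\Delta(T_A)=\Delta_{T_A}(r,d_2)=(d_{T_A}(r,d_1)+|d_1d_2|)/|rd_2|$, and the $r$--$d_1$ path in $T_A$ runs down the $y$-axis except that for each $i\in A$ it takes the detour $a_i\to c_i\to b_i$, which exceeds the straight segment $a_ib_i$ by exactly $\alpha_i+\beta_i-\gamma_i=w_i$; hence $d_{T_A}(r,d_1)=8L+\sum_{i\in A}w_i$, while $|d_1d_2|=6L$ and $|rd_2|=10L$.

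Given these identities, the forward direction is immediate: if the knapsack instance has a positive answer, pick a feasible subset $A$ with $\sum_{i\in A}w_i\le W$ and $\sum_{i\in A}p_i\ge P$; then $T_A$ has $\Delta(T_A)\le 1.4+W/(10L)$ and $\ell(T_A)\le\ell(T_0)-P$.

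For the reverse direction, let $T$ be a spanning tree with $\Delta(T)\le 1.4+W/(10L)$ and $\ell(T)\le\ell(T_0)-P$, so in particular $\ell(T)<\ell(T_0)$. I would first reduce to the case that $T$ is regular. If $T$ is irregular and has an edge $vd_2$ with $v\neq d_1$, then Lemma~\ref{lem:exclude-d2} gives $\ell(T)>\ell(T_0)$, a contradiction; so $d_2$ is adjacent only to $d_1$, whence $d_T(r,d_2)=d_T(r,d_1)+|d_1d_2|\ge 8L+6L=14L$ and $\Delta(T)\ge 1.4$, and Lemma~\ref{lem:regular-best} then replaces $T$ by a regular tree no worse in cost or delay. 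So assume $T$ is regular; then the edges it contains at item~$i$ form a two-element subset of $\{a_ib_i,b_ic_i,a_ic_i\}$, and we set $A=\{i: a_ib_i\notin T\}$. For $i\notin A$, $T$ uses $a_ib_i$ together with whichever of $b_ic_i$, $a_ic_i$ it selected, and because $\alpha_i\le\beta_i$ this block costs at least as much as in $T_A$; for $i\in A$ the two blocks coincide. Hence $\ell(T)\ge\ell(T_A)=\ell(T_0)-\sum_{i\in A}p_i$, which with $\ell(T)\le\ell(T_0)-P$ yields $\sum_{i\in A}p_i\ge P$. Similarly the $r$--$d_1$ path in $T$ detours exactly at the items of $A$, so $\Delta(T)=1.4+\frac{1}{10L}\sum_{i\in A}w_i$, and the delay bound forces $\sum_{i\in A}w_i\le W$. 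Thus $A$ witnesses a positive answer to the knapsack instance.

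I expect the main obstacle to be exactly this reduction of an arbitrary tree to a regular one: an irregular tree might conceivably beat the bounds either by connecting $d_2$ to a closer point than $d_1$, or by using irregular edges inside the item gadgets. The first is ruled out by Lemma~\ref{lem:exclude-d2} together with the strict inequality $\ell(T_0)-P<\ell(T_0)$, and the second by Lemma~\ref{lem:regular-best}, whose hypothesis $\Delta(T)\ge 1.4$ is delivered precisely by the $d_2$-connectivity argument above. Everything else is bookkeeping with the definitions of $\alpha_i,\beta_i,\gamma_i$, with the coordinates of the points, and with Lemma~\ref{lem:regular} for reading off the delay of a regular tree.
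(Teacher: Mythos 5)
Your proof is correct and follows essentially the same route as the paper: the forward direction builds the same regular tree (your $T_A$) and verifies the cost and delay via $\gamma_i-\beta_i=p_i$ and $\alpha_i+\beta_i-\gamma_i=w_i$, and the reverse direction reduces to a regular tree exactly as the paper does, using Lemma~\ref{lem:exclude-d2} to force $d_2$'s neighbor to be $d_1$ (hence $\Delta(T)\ge 1.4$) and Lemma~\ref{lem:regular-best} to regularize, before the same bookkeeping with $A=\{i:a_ib_i\notin T\}$. The only cosmetic difference is that you package the computation as exact identities for the family $T_A$, where the paper states inequalities directly.
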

\begin{proof}
  We first assume that the \textsc{Knapsack} instance has a positive
  answer.  Let $I \subseteq \{1,2,\ldots,n\}$ be a set of indices such
  that $\sum_{i \in I}p_{i} \geq P$ and $\sum_{i \in I}w_{i} \leq W$.
  Let $T$ be the tree consisting of all regular edges, except that we
  exclude $a_{i}b_{i}$ for $i \in I$, and exclude $a_{i}c_{i}$
  for $i \not\in I$.

  Then we have
  \begin{align*}
    \ell(T) & = \ell(T_{0}) - \sum_{i \in I}(\gamma_i -
    \beta_i) = \ell(T_{0}) - \sum_{i \in I}p_{i} \leq \ell(T_{0}) - P, \\
    d_T(r, d_2) & = |rd_1| + |d_1d_2| + \sum_{i \in
      I}(\alpha_i + \beta_i - \gamma_i)
    = 14L + \sum_{i \in I} w_{i} \leq 14L + W \\
    \Delta(T) & = \Delta_T(r, d_2) \leq \frac{14L + W}{10L} = 1.4 + (W/10L),
  \end{align*}
  and the claim follows.

  Assume now that $T$ is a spanning tree for~$S$ with the given
  bounds.  If $\Delta(T) < 1.4$, then $T$ must include an edge
  incident to $d_2$ other than $d_1d_2$ and is not regular. But then
  Lemma~\ref{lem:exclude-d2} implies that $\ell(T)> \ell(T_0)$, a
  contradiction.  So $\Delta(T) \geq 1.4$, and by
  Lemma~\ref{lem:regular-best} we can assume that $T$ is regular.

  Since $T$ is a spanning tree, it must include all regular edges,
  except that for each~$1 \leq i \leq n$, one of the three edges
  $a_{i}b_{i}$, $a_{i}c_{i}$, or $b_{i}c_{i}$ must be missing.  We
  define $I \subset \{1,2,\dots,n\}$ to be the set of indices~$i$ such
  that $T$ does \emph{not} include the edge~$a_{i}b_{i}$.

  We have
  \[
  d_T(r, d_2) = |rd_1| + |d_1d_2| + \sum_{i\in I}(\alpha_i + \beta_i -
  \gamma_{i}) = 14L + \sum_{i \in I} w_{i}.
  \]
  Since $\Delta(T)= \Delta_{T}(r,d_2) = d_T(r,d_2)/10L$, we have
  \[
  \sum_{i\in I}w_{i} = d_T(r,d_2) - 14L = 10L\Delta(T) - 14L
  \leq 10L(1.4 + (W/10L)) - 14L = W.
  \]

  The cost of $T$ is
  \[
  \ell(T) \geq \ell(T_0) - \sum_{i \in I}(\gamma_i - \beta_i)
  = \ell(T_0) - \sum_{i\in I}p_{i},
  \]
  and so
  \[
  \sum_{i\in I}p_{i} \geq \ell(T_{0}) - \ell(T)
  \geq \ell(T_0) - (\ell(T_0) - P) = P.
  \]
  It follows that the \textsc{Knapsack} instance has a positive
  answer.
\end{proof}

\paragraph{Reduction with integer coordinates.}

To complete our proof of Theorem~\ref{thm:np-hard}, we need to
construct a set of points with integer coordinates, such that the
total number of bits is polynomial in the size of \textsc{Knapsack}
instance. The construction given so far does not achieve this yet,
since the points $c_i$ are defined as the solution of a quadratic
equation. We will therefore compute \emph{approximations}
$\tilde{c_i}$ with $|c_i-\tilde{c_i}| < \eps$, for an $\eps$ to be
determined later. The set of points obtained in that way will be
denoted by $\tilde{S}$, which is the set of points $r$, $a_i$, $b_i$,
$\tilde{c_i}$, $d_0$, $d_{1}$, $d_{2}$. In the following lemma, we
bound by how much this approximation can change the tree cost and
delay.
\begin{lemma}
  \label{lem:error}
  If $T$ is a spanning tree on~$S$ and $\tilde{T}$ is the
  corresponding tree on~$\tilde{S}$, then
  $|\ell(T)-\ell(\tilde{T})|<12n\eps$, and
  $|\Delta(T)-\Delta(\tilde{T})|<20n\eps$.
\end{lemma}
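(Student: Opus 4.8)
The plan is to bound the cost difference and the delay difference separately, both exploiting the fact that in passing from $S$ to $\tilde S$ only the $n$ points $c_i$ move, each by less than $\eps$, and that $\tilde T$ has literally the same edge set as $T$ (with $c_i$ replaced by $\tilde c_i$), so $T$ and $\tilde T$ can be compared edge by edge and path by path.

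For the cost I would just use the triangle inequality on each edge: for an edge $uv$ of $T$ we have $|\,|uv|-|\tilde u\tilde v|\,| \le |u\tilde u| + |v\tilde v| < 2\eps$, since each endpoint contributes either $0$ (if it is not one of the $c_i$) or less than $\eps$. As $T$ has $3n+3$ edges, summing gives $|\ell(T)-\ell(\tilde T)| < 2(3n+3)\eps \le 12n\eps$ for $n\ge 1$.

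For the delay I would first record three elementary facts about the construction. First, every $v \in S\setminus\{r\}$ satisfies $|rv| > 3L$: this is immediate for $a_i, b_i, d_0, d_1, d_2$ (which lie on a ray from $r$ with $|ra_1| = 4L$ and $|rd_2| = 10L$), and for $c_i$ it follows from $|rc_i| \ge |ra_i| - |a_ic_i| \ge 4L - \beta_i > 3L$, using $\beta_i < \gamma_i \le m < L$. Second, all coordinates lie in $[-6L, L]\times[-8L, 0]$, so $|uv| < 11L$ for any two points of $S$, and in particular every edge has length less than $11L$. Third, $L = \sum_i(\gamma_i + m) \ge 8n$, since $\gamma_i = 3p_i + w_i \ge 4$ and $m \ge 4$. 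We may also assume $\eps \le L$ (its value is at our disposal and will be minuscule), so that $|r\tilde v| > 3L - \eps \ge 2L$ for every $v$. Now fix $v$ and set $d = d_T(r,v)$, $\tilde d = d_{\tilde T}(r,\tilde v)$, $y = |rv|$, $\tilde y = |r\tilde v|$. The $r$-to-$v$ path in $T$ has at most $3n+3$ edges, so $d < 11L(3n+3)$, and comparing it to the corresponding path in $\tilde T$ gives $|d-\tilde d| < 2(3n+3)\eps$; also $|y - \tilde y| < \eps$ since $r$ is fixed. Writing $\tilde y = y + \eta_y$ and $\tilde d = d + \eta_d$ and expanding, $\Delta_T(r,v) - \Delta_{\tilde T}(r,\tilde v) = (d\eta_y - \eta_d y)/(y\tilde y)$, whose absolute value is at most $d\eps/(y\tilde y) + |\eta_d|/\tilde y$. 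Substituting $y > 3L$, $\tilde y > 2L$, $d < 11L(3n+3)$, $|\eta_d| < 2(3n+3)\eps$, and then $L \ge 8n$, a crude computation bounds this by a fixed small multiple of $\eps$, comfortably below $20n\eps$. Finally, since $\Delta(T)$ and $\Delta(\tilde T)$ are the maxima over $v$ of $\Delta_T(r,v)$ and $\Delta_{\tilde T}(r,\tilde v)$, and $|\max_v f(v) - \max_v g(v)| \le \max_v|f(v)-g(v)|$, the per-pair estimate upgrades to $|\Delta(T) - \Delta(\tilde T)| < 20n\eps$.

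The cost bound is routine. The one step that needs care is controlling the ratio $d_T(r,v)/|rv|$ under the perturbation: the movement of the denominator is harmless only because every point lies at distance more than $3L$ from $r$; the ratio $d_T(r,v)/|rv|$ is itself only $O(n)$ because any spanning-tree path has length $O(nL)$; and the leftover factor of $n$ is absorbed by $L \ge 8n$. Lining up these three estimates, rather than any single inequality, is the main obstacle.
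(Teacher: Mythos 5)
Your proof is correct and follows essentially the same route as the paper's: a per-edge $2\eps$ bound summed over the $3n+3$ edges for the cost, and a quotient-perturbation estimate $\bigl|d_T(r,v)/|rv| - d_{\tilde T}(r,\tilde v)/|r\tilde v|\bigr|$ controlled by an upper bound on path length and a lower bound on the distance of every point from $r$. The only difference is bookkeeping: the paper keeps a factor of $n$ by bounding $d_T(r,v)/|rv| < 7.5(n+1)$ and using only $|r\tilde v|\ge 1$, whereas you use $|r\tilde v|>2L$ together with $L\ge 8n$ to absorb that factor, obtaining an even stronger $O(\eps)$ bound per pair.
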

\begin{proof}
  Let $u$, $v$ be a pair of points in $S$, with $\tilde{u}$,
  $\tilde{v}$ denoting the corresponding points in $\tilde{S}$.  Since
  $|u\tilde{u}| < \eps$ and $|v\tilde{v}| < \eps$, we have
  $||u\tilde{u}| -|v\tilde{v}|| < 2\eps$. The tree $T$ has $3n+3$
  edges, and so $|\ell(T)-\ell(\tilde{T})| < (6n+6)\eps \leq 12n \eps$.

  Consider now $X := d_T(r, v)$, $\tilde{X} := d_{\tilde{T}} (r,
  \tilde{v}$), $Y := |rv|$, and $\tilde{Y}:= |r\tilde{v}|$. Since
  $|v\tilde{v}| < \eps$, we have $|Y -\tilde{Y}| < \eps$. The path
  from~$r$ to $\tilde{v}$ in $\tilde{T}$ passes through most
  $n$~approximated points, and so $|X-\tilde{X}| < 2n\eps$. Since the
  longest edge in $T$ has length~$10L$ and the path has at most
  $3(n+1)$~edges, we have $X<30L(n+1)$. We also have that $Y \geq 4L$
  by the construction of~$S$. This means that $X/Y<7.5(n+1)$. We get
  \begin{align*}
    \frac{\tilde{X}}{\tilde{Y}} - \frac{X}{Y}
    & = \frac{\tilde{X}Y-X\tilde{Y}}{Y\tilde{Y}}
    < \frac{Y(X+2n\eps)-X(Y-\eps)}{Y\tilde{Y}}
    = {{2n\eps} \over \tilde{Y}}+ {\eps \over \tilde{Y}} \cdot
    {X \over Y}, \\
    {{X \over Y}-{\tilde{X} \over \tilde{Y}}}
    &= {{X\tilde{Y}-\tilde{X}Y} \over {Y\tilde{Y}}}
    < {{X(Y+\eps)-Y(X-2n\eps)} \over Y\tilde{Y}}
    = {{2n\eps} \over \tilde{Y}}+ {\eps \over \tilde{Y}} \cdot
    {X \over Y}.
  \end{align*}
  And since $\tilde{Y} \geq 1$,
  \[
  {{2n\eps} \over \tilde{Y}}+ {\eps \over \tilde{Y}} \cdot
  {X \over Y}
  \leq {2n\eps}+ {\eps} \times 7.5(n+1)<20n\eps. \qedhere
  \]
\end{proof}

If the \textsc{Knapsack} instance has a positive answer, then $S$ has
a spanning tree~$T$ with $\Delta(T) \leq 1.4 + (W/10L)$ and $\ell(T)
\leq \ell(T_0) - P$.  On the other hand, if the instance has a
negative answer, then this implies that for any subset of indices~$I
\subset \{1,2,\dots,n\}$ we have either $\sum_{i \in I}w_{i} \geq W +
1$ or $\sum_{i \in I} p_{i} \leq P - 1$.  By
Lemma~\ref{lem:reduction}, this means that any spanning tree~$T$
for~$S$ has either delay $\Delta(T) \geq 1.4 + ((W+1)/10L)$ or cost
$\ell(T) \geq \ell(T_{0}) - P + 1$.

Let us set $\eps = 1/600nL$.  We approximate the $c_i$ with a
precision of most~$\eps$, resulting in the point set~$\tilde{S}$.
This set is the input to our problem, with a delay bound of $\delta =
1.4 + (W/10L) + (1/20L)$, and a cost bound of $K = \ell(\tilde{T}_0) -
P + 0.5$.

If the \textsc{Knapsack} instance has a positive answer, then by
Lemma~\ref{lem:error}, there is a spanning tree~$\tilde{T}$
for~$\tilde{S}$ with $\Delta(\tilde{T}) \leq 1.4 + (W/10L) + 20n\eps =
1.4 + (W/10L) + (1/30L) < \delta$ and $\ell(\tilde{T}_0) -
\ell(\tilde{T}) > P - 24n\eps > P - 0.5$.

On the other hand, if the \textsc{Knapsack} instance has a negative
answer, then by Lemma~\ref{lem:error}, every spanning tree~$\tilde{T}$
for~$\tilde{S}$ has either $\Delta(\tilde{T}) \geq 1.4 + (W/10L) +
(1/10L) - 20n\eps = 1.4 + (W/10L) + (1/10L) - (1/30L) =
1.4 + (W/10L) + (1/15L) > \delta$, or we have
$\ell(\tilde{T}_0) - \ell(\tilde{T}) < P - 1 + 24n\eps < P - 0.5$.

In both cases, solving our single-source dilation-bounded minimum
spanning tree problem correctly answers the \textsc{Knapsack}
instance.

By construction, the points $r, a_{i}$, $b_{i}$, $d_j$ have integer
coordinates.  We construct the points $\tilde{c}_i$ by solving a
quadratic equation with an error of at most $2^{-k}$, that is, with
$k$~bits after the binary point, where $k$ is chosen such that $2^{k}
> 600nL$.  Clearly $k$ is polynomial in the input size.  If we
multiply all point coordinates in our construction and the cost bound
by~$2^{k}$, then all points have integer coordinates.

\subsubsection*{Acknowledgments.}

We are grateful to Joachim Gudmundsson for helpful discussions, in
particular about the approximation algorithm.

\bibliographystyle{plain}
\bibliography{adbmst}
\end{document}